\crefname{algocf}{alg.}{algs.}
\Crefname{algocf}{Algorithm}{Algorithms}
\crefname{problem}{problem}{problems}
\Crefname{problem}{Problem}{Problems}
\DeclareMathOperator*{\argmax}{argmax}
\DeclareMathOperator*{\leftext}{lext}
\DeclareMathOperator*{\rightext}{rext}
\newcommand{\forward}{\mathtt{forward}}
\newcommand{\No}[1]{}
\newtheorem{theorem}{Theorem}
\crefname{theorem}{theorem}{theorems}
\Crefname{theorem}{Theorem}{Theorems}
\newtheorem{lemma}{Lemma}
\crefname{lemma}{lemma}{lemmas}
\Crefname{lemma}{Lemma}{Lemmas}
\newtheorem{corollary}{Corollary}
\crefname{corollary}{corollary}{corollaries}
\Crefname{corollary}{Corollary}{Corollaries}
\newtheorem{definition}{Definition}
\crefname{definition}{definition}{definitions}
\Crefname{definition}{Definition}{Definitions}
\newtheorem{problem}{Problem}
\crefname{problem}{problem}{problems}
\Crefname{problem}{Problem}{Problems}
\crefname{observation}{observation}{observations}
\Crefname{observation}{Observation}{Observations}
\crefname{remark}{remark}{remarks}
\Crefname{remark}{Remark}{Remarks}
\crefname{assumption}{assumption}{assumptions}
\Crefname{assumption}{Assumption}{Assumptions}
\newenvironment{customthm}[1]
  {\innercustomthm}
  {\endinnercustomthm}
\newenvironment{customcrl}[1]
  {\innercustomcrl}
  {\endinnercustomcrl}
\def\HiLi{\leavevmode\rlap{\hbox to \hsize{\color{gray!25}\leaders\hrule height .8\baselineskip depth .5ex\hfill}}}
\def\HiLiHybrid{\leavevmode\rlap{\hbox to \hsize{\color{gray!10}\leaders\hrule height .8\baselineskip depth .5ex\hfill}}}
\begin{document}

\title{Chaining of Maximal Exact Matches in Graphs}

\date{
Department of Computer Science, University of Helsinki, Finland
}

\author{
Nicola Rizzo\,\orcidlink{0000-0002-2035-6309}\\\texttt{nicola.rizzo@helsinki.fi} \and Manuel C\'aceres\,\orcidlink{0000-0003-0235-6951}\\\texttt{manuel.caceres@helsinki.fi} \and Veli M\"akinen\,\orcidlink{0000-0003-4454-1493}\\\texttt{veli.makinen@helsinki.fi}
}

\maketitle

\begin{abstract}
We show how to chain \emph{maximal exact matches} (MEMs) between a query string $Q$ and a labeled directed acyclic graph (DAG) $G=(V,E)$ to solve the \emph{longest common subsequence} (LCS) problem between $Q$ and $G$. We obtain our result via a new symmetric formulation of chaining in DAGs that we solve in $O(m+n+k^2|V| + |E| + kN\log N)$ time, where $m=|Q|$, $n$ is the total length of node labels, $k$ is the minimum number of paths covering the nodes of $G$ and $N$ is the number of MEMs between $Q$ and node labels, which we show encode full MEMs.
\\[1ex]\textbf{Keywords} sequence to graph alignment, longest common subsequence, sparse dynamic programming
\\[1ex]\textbf{Funding} This project received funding from the European Union’s Horizon 2020 research and innovation programme under the Marie Skłodowska-Curie grant agreement No.\ 956229, and from the Academy of Finland grants No.\ 352821 and 328877.
\end{abstract}

\section{Introduction}

Due to recent developments in \emph{pangenomics}~\cite{Maretal16} there is a high interest to extend the notion of string alignments to graphs. A common pangenome representation is a node-labeled directed acyclic graph (DAG), whose paths represent plausible individual genomes from a species. 
Unfortunately, even finding an exact occurrence of a query string as a subpath in a graph is a conditionally hard problem~\cite{equi2023complexity,EMT21}: only quadratic time dynamic programming solutions are known and faster algorithms would contradict the Strong Exponential Time Hypothesis (SETH). Due to this theoretical barrier, parameterized solutions have been developed ~\cite{DBLP:conf/cpm/Caceres23,cotumaccio2021indexing,DBLP:conf/dcc/Cotumaccio22,rizzo2022solving}, and/or the task has been separated into finding short exact occurrences (anchors) and then \emph{chaining} them into longer matches~\cite{Maketal19,li2020design,Ma2022.01.07.475257,GC23}. Although the chaining algorithms provide exact solutions to their internal chaining formulations and their solutions can be interpreted as alignments of queries to a graph with edit operations, so far they have not been shown to provide exact solutions to the corresponding alignment formulation.

In this paper, we integrate a symmetric formulation from string chaining \cite{SK03,MS20} to graph chaining  \cite{Maketal19} yielding the first chaining-based parameterized exact alignment algorithm between a query string and a graph. Namely, we obtain an $O(m+n+k^2|V|+|E|+kN\log N)$ time algorithm for computing the length of a \emph{longest common subsequence} (LCS) between a query string $Q$ and a path of $G$, where $m=|Q|$, $n$ is the total length of node labels, $k$ is the width (minimum number of paths covering the nodes) of $G$, and $N$ is the number of \emph{maximal exact matches} (MEMs) between $Q$ and the node labels (node MEMs).

The paper is structured as follows. The preliminaries in \Cref{sect:preliminaries} and the basic concepts in \Cref{sect:nodeMEMs} follow the notions developed in our recent work~\cite{RCM23b}, where we introduce the definition of a MEM between a string and a graph, and study the non-trivial problem of finding graph MEMs with a length threshold; for the purposes of this paper, we observe that node MEMs are sufficient.
In \Cref{sect:DAGCLCrevisited}, we revise the solution for an asymmetric chaining formulation in DAGs~\cite{Maketal19} for the case of node MEMs. Then, in \Cref{sect:symmetricCLCrevisited}, we tailor the string to string symmetric chaining algorithm \cite{SK03,MS20} to use MEM anchors. In \Cref{sect:symmetricDAGCLCintegration}, we show how to integrate these two approaches to obtain our main result. Finally, in \Cref{sect:discussion} we discuss the length threshold setting and cyclic graphs.

\section{Preliminaries}\label{sect:preliminaries}
\paragraph*{Strings.}
We work with strings coming from a finite alphabet $\Sigma = [1..\sigma]$ and assume that $\sigma$ is at most the length of the strings we work with. For two integers $x$ and $y$ we use $[x..y]$ to denote the integer interval $\{x, x+1, \ldots, y\}$ or the empty set $\emptyset$ when $x>y$. 
A \emph{string} $T$ is an element of $\Sigma^n$ for a non-negative integer $n$, that is  sequence of $n$ symbols from $\Sigma$, where $n = |T|$ is the \emph{length} of the string. We denote $\varepsilon$ to the only string of length zero. We also denote $\Sigma^+ = \Sigma^*\setminus \{\varepsilon\}$.
For two strings $T_1$ and $T_2$ we denote their \emph{concatenation} as $T_1 \cdot T_2$, or just $T_1 T_2$. For a set of integers $I$ and a string $T$, we use $T[I]$ to denote the \emph{subsequence} of $T$ made of the concatenation of the characters indicated by $I$ in increasing order. If $I$ is an integer interval $[x..y]$, then $T[x..y]$ is a \emph{substring}: if $x = y$ then we also use $T[x]$, if $y<x$ then $T[x..y] = \varepsilon$, if $x\le y = n$ we call it a \emph{suffix} (\emph{proper suffix} when $x > 1$) and if $1 = x \le y$ we call it a \emph{prefix} (\emph{proper prefix} when $y < n$). 
A length-$\kappa'$ substring $Q[x..x+\kappa'-1]$ \emph{occurs} in $T$ if $Q[x..x+\kappa'-1] = T[i..i+\kappa'-1]$; in this case, we say that $(x,i,\kappa')$ is an \emph{(exact) match} between $Q$ and $T$, and \emph{maximal} (a MEM) if the match cannot be extended to the left (\emph{left-maximality}), that is, $x_1 = 1$ or $x_2 = 1$ or $Q[x_1-1] \neq T[x_2-1]$ nor it can be extended to the right (\emph{right-maximality}) $x_1 + \ell = \lvert Q \rvert$ or $x_2 + \ell = \lvert T \rvert$ or $Q[x_1 + \ell] \neq T[x_2 + \ell]$.

\paragraph*{Labeled graphs.}

We work with labeled directed acyclic graphs (DAGs) $G=(V,E,\ell)$, where $V$ is the vertex set, $E$ the edge set, and $\ell: V \to \Sigma^+$ a \emph{labeling} function on the vertices. A length-$k$ \emph{path} $P$ from $v_1$ to $v_k$ is a sequence of nodes $v_1, \ldots, v_k$ such that $(v_1,v_2),(v_2,v_3),\ldots,(v_{k-1},v_k) \in E$, in this case we say that $v_1$ \emph{reaches} $v_k$. We extend the labeled function to paths by concatenating the corresponding node labels, that is, $\ell(P) \coloneqq \ell(v_1) \cdots \ell(v_k)$. 
For a node $v$ and a path $P$ we use $\lVert\cdot\rVert $ to denote its \emph{string length}, that is $\lVert v\rVert  = |\ell(v)|$ and $\lVert P\rVert  = |\ell(P)|$. We say that a length-$\kappa'$ substring $Q[x..x+\kappa'-1]$ \emph{occurs} in $G$ if $Q[x..x+\kappa'-1]$ occurs in $\ell(P)$ for some path $P$. In this case, we say that $([x..x+\kappa'-1], (i, P = v_1\ldots v_k, j))$ is an \emph{(exact) match} between $Q$ and $G$, where $Q[x..x+\kappa'-1] = \ell(v_1)[i..] \cdot \ell(v_2) \cdots \ell(v_{k-1}) \cdot \ell(v_k)[..j]$, with $1 \le i \le \lVert v_1\rVert $ and $1 \le j \le \lVert v_k\rVert $. We call the triple $(i, P, j)$ a \emph{substring} of $G$ and we define its \emph{left-extension} $\leftext(i,P,j)$ as the singleton $\lbrace \ell(v_1)[i-1] \rbrace$ if $i > 1$ and $\lbrace \ell(u)[\lVert u\rVert ] \mid (u,v_1) \in E \rbrace$ otherwise.
Analogously, the \emph{right-extension} $\rightext(i,P,j)$ is $\lbrace \ell(v_k)[j+1] \rbrace$ if $j < \lVert v_k\rVert $ and $\lbrace \ell(v)[1] \mid (v_k, v) \in E \rbrace$ otherwise.
Note that the left (right) extension can be equal to the empty set $\emptyset$, if the start (end) node of $P$ does not have incoming (outgoing) edges. See \Cref{fig:subpath}. 

\paragraph*{Chaining of matches.}

An \emph{asymmetric chain} $A'[1..N']$ is an ordered subset of a set $A$ of $N$ exact matches between a labeled DAG $G=(V,E,\ell)$ and a query string $Q$,
with the ordering $A'[l]<A'[l+1]$ for $1\leq l<N'$ defined as $([x'..x'+\kappa''-1], (i',P_l,j'))<([x..x+\kappa'-1], (i,P_{l+1},j))$ iff the start of path $P_{l+1}$ is strictly reachable from the end of path $P_l$ and $x'\leq x$. The asymmetry comes from the fact that overlaps are not allowed in $G$, but they are allowed in $Q$. We are interested in chains that maximize the length of an induced subsequence $Q'$, denoted $Q'=Q \mid A'$, that is obtained by deleting all parts of $Q$ that are not covered by chain $A'$. For example, consider $Q=\mathtt{ACATTCAGTA}$ and $A'=([2..4], (i_1,P_1,j_1)),([3..6], (i_2,P_2,j_2)), ([9..10], (i_3,P_3,j_3))$. Then $Q' = Q \mid A' =\mathtt{CATTCTA}$; anchors cover the underlined part of $Q=\mathtt{A\underline{CATTC}AG\underline{TA}}$.

We could define symmetric chains by considering overlaps of paths, but for the purposes of this paper it will be sufficient to consider overlaps of exact matches inside the nodes of $G$: A \emph{symmetric chain} $A'[1..N']$ is an ordered subset of a set $A$ of $N$ exact matches between the nodes of a labeled DAG $G=(V,E,\ell)$ and a query string $Q$,
with the ordering $A'[l]<A'[l+1]$ for $1\leq l<N'$ defined as $([x'..x'+\kappa''-1], (i',v,j'))<([x..x+\kappa'-1], (i,w,j))$ iff (i) $w$ is strictly reachable from $v$ or $v=w$ and $i'\leq i$ and (ii) $x'\leq x$. We extend the notation $Q'=Q \mid A'$ to cover symmetric chains $A'$ so that $Q'$ is obtained by deleting all parts of $Q$ that are not \emph{mutually} covered by chain $A'$. We define mutual coverage in \Cref{sect:symmetricCLCrevisited}: Informally, $Q'$ is formed by concatenating the prefixes of exact matches until reaching the overlap between the next exact match in the chain. \Cref{fig:subpath} illustrates the concept.

\begin{figure}
\centering
\includegraphics[width=.55\textwidth]{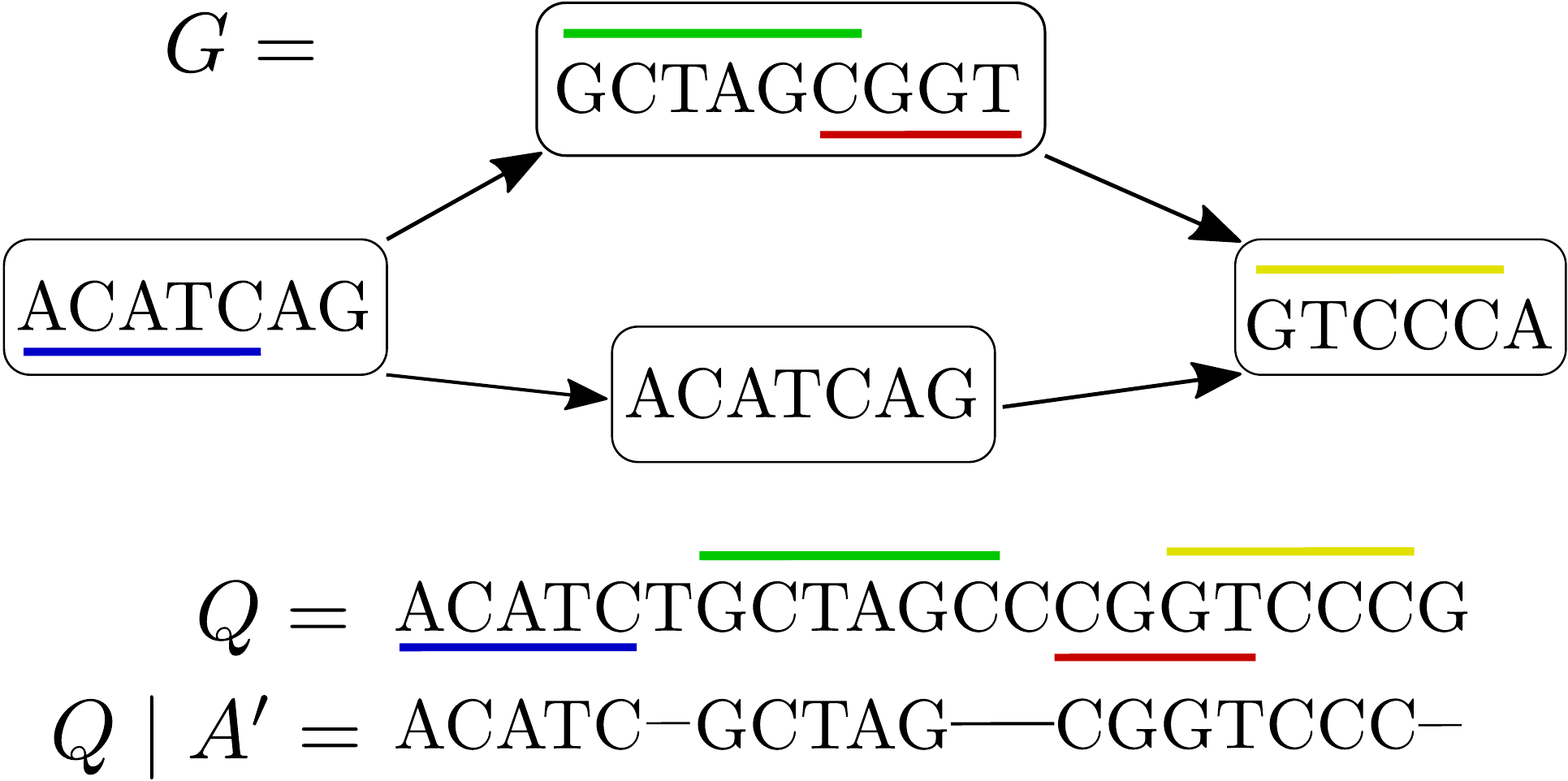}
\caption{Co-linear chaining setting between a string $Q$ and a labeled graph $G$. If $v$ is the last node to the right, then $([16..20],(1,v,5))$ is a match, with $\leftext(1,v,5) = \lbrace \mathtt{G}, \mathtt{T} \rbrace$ and $\rightext(1,v,5) = \lbrace \mathtt{A} \rbrace$. It is a MEM, since $\lvert \leftext(i,P,j) \rvert \ge 2$ and it cannot be extended to the right (\Cref{def:mem}). In fact, all exact matches are MEMs and they form a symmetric chain $A'$ (blue-green-red-yellow) inducing the subsequence $Q \mid A'$ (the last $\mathtt{C}$ of green match is omitted due to overlap with the red match).}\label{fig:subpath}
\end{figure}

\section{Finding MEMs in labeled DAGs\label{sect:nodeMEMs}}

We now consider the problem of finding all \emph{maximal exact matches} (MEMs) between a labeled graph $G$ and a query string $Q$ for the purpose of chaining.
\begin{definition}[MEM between a pattern and a graph~\cite{RCM23b}]\label{def:mem}
Let $G = (V,E,\ell)$ be a labeled graph, with $\ell \colon V \to \Sigma^+$, and $Q \in \Sigma^+$. We say that a match $([x..y],(i,P,j))$ between $Q$ and $G$ is \emph{left-maximal} (\emph{right-maximal}) if it cannot be extended to the left (right) in both $Q$ and $G$, that is,
\begin{align*}
    (\mathsf{LeftMax}) & \quad x=1 \vee \leftext(i,P,j) = \emptyset \vee Q[x-1] \notin \leftext(i,P,j) \qquad\text{and}
     \\
    (\mathsf{RightMax}) & \quad y = \lvert Q \rvert \vee \rightext(i,P,j) = \emptyset \vee Q[y+1] \notin \rightext(i,P,j).
\end{align*}
The pair $([x..y],(i,P,j))$ is a MEM if it is left-maximal or its left (graph) extension is not a singleton, and right-maximal or its right (graph) extension is not a singleton, that is
    $\mathsf{LeftMax} \vee \lvert \leftext(i,P,j) \rvert\geq 2$
    and
    $\mathsf{RightMax} \vee \lvert \rightext(i,P,j) \rvert\geq 2$.
\end{definition}

See \Cref{fig:subpath} for an example.
We use this particular extension of MEMs to graphs---with the additional conditions on non-singletons $\leftext$ and $\rightext$---as it captures all MEMs between $Q$ and $\ell(P)$, where $P$ is a source-to-sink path in $G$.
Moreover, we will show that this MEM formulation captures LCS through co-linear chaining, whereas avoiding the additional conditions would fail. Indeed, consider $Q$, $G$, and match $([16..20],(1,v,5))$ from \Cref{fig:subpath}: the match is not left-maximal, since $Q[15] = \mathtt{G}$ and $\mathtt{G} \in \leftext(1,v,5)$, but extending it would impose any chain using it as an anchor to go through the bottom suboptimal path, that in this case does not capture the LCS between $Q$ and $G$. Also, it turns out that we can focus on MEMs between the node labels and the query, as chaining will cover longer MEMs implicitly. 

To formalize the intuition, we say that a \emph{node MEM} is a match $(i,P,j)$ of $Q[x..y]$ in $G$ such that $P = v$ and it is left and right maximal w.r.t. $\ell(P)$ only in the string sense: conditions $\mathsf{LextMax} \vee i = 1$ and $\mathsf{RightMax} \vee j = \lVert v \rVert$ hold. 
Consider the text
\[
    T_\text{nodes} = \prod_{v \in V} \mathbf{0} \cdot \ell(v),
\]
where $\mathbf{0}\notin \Sigma$ is used as a delimiter to prevent MEMs spanning more than a node label. Running the MEM finding algorithm~\cite{BCKM20} on $Q$ and $T_\text{nodes}$ will retrieve exactly the node MEMs we are looking for \cite{RCM23b} (a more involved problem of finding graph MEMs with a length threshold is studied in \cite{RCM23b}, but here a simplified result without the threshold is sufficient): 

\begin{lemma}[\cite{RCM23b}]\label{lemma:nodeMEMs}
    Given a labeled DAG $G = (V,E,\ell)$, with $\ell: V \to \Sigma^+$, and a query string $Q$, we can compute all node MEMs between $Q$ and $G$ in time $O(n+m+N)$, where $n$ is the total length of node labels, $m = |Q|$, and $N$ is the number of node MEMs.
\end{lemma}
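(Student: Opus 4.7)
The plan is to reduce the problem to ordinary string MEM-finding on the concatenated text $T_\text{nodes}$ and then invoke the linear-time algorithm of~\cite{BCKM20}. I proceed in three steps.

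First, I would build $T_\text{nodes} = \prod_{v \in V} \mathbf{0} \cdot \ell(v)$ explicitly in $O(n)$ time, together with an auxiliary array mapping each position of $T_\text{nodes}$ to the corresponding node $v$ and offset inside $\ell(v)$; since $\mathbf{0} \notin \Sigma$ appears once per node, this array has length $n + |V| = O(n)$, and lookups take $O(1)$. Then I would run the linear-time MEM algorithm of~\cite{BCKM20} on $Q$ and $T_\text{nodes}$, obtaining all MEMs between them in $O(n + m + N')$ time, where $N'$ is the number of such MEMs. For each reported MEM, I translate its position in $T_\text{nodes}$ into a triple $(i, v, j)$ using the auxiliary array, in constant time per MEM.

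Second, I would argue correctness by showing the two notions of MEM coincide. Because $\mathbf{0}$ separates consecutive node labels and $\mathbf{0} \notin \Sigma$, no exact match between $Q$ and $T_\text{nodes}$ can cross a delimiter; thus every MEM in $T_\text{nodes}$ is entirely contained in some $\ell(v)$ and corresponds to a match $(i, v, j)$ with path $P = v$. For the converse direction and for the maximality equivalence, I would check the boundary cases: the classical left-maximality $x = 1 \vee Q[x-1] \neq T_\text{nodes}[\text{pos}-1]$ is automatically satisfied at a node boundary because the preceding character is $\mathbf{0}$, which matches the node-MEM condition $\mathsf{LeftMax} \vee i = 1$; similarly on the right. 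Inside the node, both definitions reduce to the same inequality on $\ell(v)$ and $Q$. Hence the set of MEMs returned by~\cite{BCKM20} is in bijection with the node MEMs, so $N' = N$.

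Third, the running time follows immediately: construction is $O(n)$, the MEM algorithm contributes $O(n + m + N)$, and translation of positions contributes $O(N)$, for a total of $O(n + m + N)$. The main (small) obstacle is the careful case analysis for maximality at node boundaries, ensuring that the delimiter $\mathbf{0}$ correctly forces the disjunct $i = 1$ (respectively $j = \lVert v \rVert$) rather than spuriously extending the match or spuriously truncating a legitimate one; once that is settled, the rest is an application of the cited linear-time machinery.
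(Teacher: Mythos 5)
Your proposal is correct and follows essentially the same route as the paper, which likewise defines $T_\text{nodes} = \prod_{v \in V} \mathbf{0}\cdot\ell(v)$ and runs the linear-time MEM-finding algorithm of the cited work on $Q$ and $T_\text{nodes}$, relying on the delimiter $\mathbf{0}\notin\Sigma$ to confine matches to single node labels and to make string-maximality at a label boundary coincide with the disjuncts $i=1$ and $j=\lVert v\rVert$ in the node-MEM definition. Your added bookkeeping (the position-to-node array and the explicit boundary case analysis) just fleshes out details the paper delegates to the cited reference.
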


Let $A$ be the set of node MEMs found using \Cref{lemma:nodeMEMs}.
In \Cref{app:longMEMs}, we show that any long MEM spanning two or more nodes in $G$ can be formed by concatenating node MEMs into \emph{perfect chains}---chains that have no gap between consecutive matches.

\begin{theorem}
\label{theorem:longMEMs}
For every MEM $([x..y], (i,P,j))$ between $G$ and $Q$, there is a perfect chain $A'[1..p]\subseteq A$ such that $A'[1] \cdots A'[p] = ([x..y], (i,P,j))$.
\end{theorem}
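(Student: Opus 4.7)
The plan is to decompose the MEM $([x..y], (i, P, j))$ along the node boundaries of the path $P = v_1 \cdots v_k$, show each resulting piece is a node MEM (hence an element of $A$), and observe that the pieces form a perfect chain whose concatenation reassembles the original MEM. Concretely, I partition $[x..y]$ into consecutive subintervals $[x_1..y_1], \ldots, [x_k..y_k]$ with $x_1 = x$, $y_k = y$, and $x_{r+1} = y_r + 1$, according to how $\ell(P) = \ell(v_1)[i..]\cdot\ell(v_2)\cdots\ell(v_{k-1})\cdot\ell(v_k)[..j]$ splits over the nodes, and set
\[
A'[1] := ([x_1..y_1], (i, v_1, \lVert v_1\rVert)),\qquad
A'[r] := ([x_r..y_r], (1, v_r, \lVert v_r\rVert))\ \text{for}\ 1 < r < k,
\]
and $A'[k] := ([x_k..y_k], (1, v_k, j))$.

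For the interior pieces ($1 < r < k$), each spans the full node label so $i = 1$ and $j = \lVert v_r\rVert$, which trivially satisfies both node-MEM conditions $\mathsf{LeftMax} \vee i=1$ and $\mathsf{RightMax} \vee j = \lVert v_r\rVert$. For $A'[1]$, right-maximality in the string sense is automatic since the piece ends at $\lVert v_1\rVert$. Left-maximality is the main subtle point: if $i = 1$ the condition is vacuous, while if $i > 1$ then $\leftext(i, P, j) = \{\ell(v_1)[i-1]\}$ is a singleton, so the graph MEM condition $\mathsf{LeftMax} \vee |\leftext| \ge 2$ of \Cref{def:mem} forces $\mathsf{LeftMax}$ on the whole match, and this transfers verbatim to $A'[1]$ since its left boundary coincides with that of the graph MEM. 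A symmetric argument handles $A'[k]$.

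It remains to check that $A'[1], \ldots, A'[k]$ form a perfect chain in the sense of \Cref{sect:symmetricCLCrevisited}. By construction the intervals in $Q$ tile $[x..y]$ with $x_{r+1} = y_r + 1$, so there are no gaps. For consecutive pieces, $(v_r, v_{r+1}) \in E$ makes $v_{r+1}$ strictly reachable from $v_r$ (and the nodes are distinct since $G$ is a DAG along the path $P$), while $x_r < x_{r+1}$, so the symmetric chain ordering $A'[r] < A'[r+1]$ holds. The concatenation $A'[1] \cdots A'[k]$ therefore rebuilds exactly $([x..y], (i, P, j))$.

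The main obstacle is precisely the endpoint case analysis above: one has to verify that the graph-MEM relaxation allowing non-singleton extensions collapses to string left/right-maximality whenever $i > 1$ or $j < \lVert v_k\rVert$, which is exactly why \Cref{def:mem} is stated the way it is. Beyond that, the argument is a direct unpacking of definitions and requires no sophisticated machinery.
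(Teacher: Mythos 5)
Your proof is correct and follows essentially the same route as the paper's: decompose the MEM along node boundaries, note that interior pieces trivially satisfy the node-MEM conditions because they span full labels, and argue that the endpoint pieces inherit left/right maximality from the original MEM (you do this directly via the singleton-extension observation, the paper by contradiction, but it is the same step). No gaps.
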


\begin{corollary}\label{cor:nodemems}
The set $A$ is a \emph{compact representation} of the set $M$ of MEMs between query $Q$ and a labeled DAG $G=(V,E,\ell)$: it holds $|A|\leq \lVert M\rVert $, where $\lVert M\rVert $ is the length of the encoding of the paths in MEMs as the explicit sequence of its nodes. 
\end{corollary}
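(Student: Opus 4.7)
The plan is to exhibit an injection $f \colon A \to \{(m,l) : m \in M,\; 1 \le l \le |P_m|\}$, which would yield $|A| \le \sum_{m \in M} |P_m| = \lVert M\rVert$. By \Cref{theorem:longMEMs}, every $m = ([x..y],(i,P,j)) \in M$ admits a perfect-chain decomposition $A'_m \subseteq A$; a short argument using the string-sense right-maximality of node MEMs (two consecutive chain elements inside the same node would imply $Q[y+1]=\ell(v)[j+1]$ at their junction, contradicting maximality) shows that consecutive chain elements lie in distinct successive nodes of $P$, so $|A'_m|$ equals the number $|P_m|$ of nodes of $P$, with $A'_m[l]$ being the portion of $m$ on the $l$-th node. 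It then suffices to show that every $a \in A$ equals $A'_m[l]$ for some $(m, l)$: the assignment $f(a) := (m, l)$ is automatically injective because $A'_m[l]$ is determined by $(m, l)$.

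Given $a = ([x..y],(i,v,j)) \in A$, I construct a candidate $m(a)$ by greedily extending $a$ in both directions within $G$: repeatedly advance the current in-node position whenever the next character of $\ell(v)$ matches the adjacent query symbol, or transition to an in- (respectively out-) neighbor from position $1$ (respectively $\lVert v \rVert$) whenever one exists whose extremal character matches the adjacent query symbol, breaking ties arbitrarily at branching nodes. Each step grows the match by one character, so the procedure terminates with a match $m$ admitting no one-character extension; hence $m$ satisfies $\mathsf{LeftMax}$ and $\mathsf{RightMax}$ from \Cref{def:mem} and belongs to $M$.

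It remains to check that $a$ is precisely the $v$-portion of $m(a)$'s perfect chain. The node-MEM property of $a$ gives string-sense maximality inside $\ell(v)$, so the extension procedure cannot progress past position $i$ leftward or $j$ rightward while remaining in $v$; moreover exiting $v$ requires being at in-node position $1$ (leftward) or $\lVert v \rVert$ (rightward). A four-way case analysis on $(i, j)$ then pins $v$: if $i > 1$, no left extension is possible at all and $v$ is the leftmost node of $P_{m(a)}$ with in-node start $i$; if $j < \lVert v \rVert$, symmetrically $v$ is the rightmost with in-node end $j$; and if $i=1$ and $j=\lVert v \rVert$, the $v$-portion is $(1, v, \lVert v \rVert) = a$ regardless of whether $v$ is terminal or internal in $P_{m(a)}$. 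Thus $a = A'_{m(a)}[l(a)]$ where $l(a)$ is $v$'s index in $P_{m(a)}$, and injectivity of $f$ delivers $|A| \le \lVert M \rVert$. The main delicacy is precisely this case analysis: one must leverage both halves of the node-MEM definition against the boundary conditions of graph-MEM extension to rule out the possibility that greedy extension enters or exits $v$ at an in-node position strictly interior to $[i,j]$.
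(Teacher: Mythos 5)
Your proof is correct and takes essentially the same route as the paper's: the paper's (much terser) argument likewise combines \Cref{theorem:longMEMs} with the observation that every node MEM extends to at least one full MEM whose path contains its node, which is exactly your injection of $A$ into (MEM, path-position) pairs counted by $\lVert M\rVert$. Your write-up just makes explicit the greedy extension and the case analysis identifying each node MEM with the corresponding element of the perfect chain.
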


Our strategy is to use set $A$ as the representation of MEMs: Perfect chains are implicitly covered by the chaining algorithms of next section.

\section{Symmetric co-linear chaining in labeled DAGs}
\label{sect:clc}
M\"akinen et al. \cite[Theorem 6.4]{Maketal19} gave an $O(kN \log N + k|V|)$-time algorithm to find an asymmetric chain $A'[1..N']$ of a set $A$ of $N$ anchors\footnote{Anchors have the same representation as graph MEMs, $([x..y], (i,P,j))$, but they do not necessarily represent exact matches.} between a labeled DAG $G=(V,E,\ell)$ and a query string $Q$ maximizing the length of an induced subsequence $Q'=Q \mid A'$. Here $k$ is the \emph{width} of $G$, that is, the minimum number of paths covering nodes $V$ of $G$. The algorithm assumes a minimum path cover as its input, which can be computed in $O(k^2|V|+|E|)$ time \cite{Arieletal22,caceres2022minimum}. A limitation of this chaining algorithm is that anchors in the solution are not allowed to overlap in the graph, which has been partially solved by considering one-node overlaps~\cite{Ma2022.01.07.475257}. However, both of these approaches maximize the length of the sequence induced by the reported chain only on the string $Q$, which makes the problem formulation asymmetric.

In the case of two strings as input, the asymmetry of the coverage metric was solved by M\"akinen and Sahlin \cite{MS20} applying the technique by Shibuya and Kurochkin \cite{SK03}. They provided an $O(N \log N)$-time algorithm to find a symmetric chain $A'[1..N']$ of a set $A$ of $N$ anchors maximizing the length of an \emph{induced common subsequence} $C=Q \mid A'=T \mid A'$ between two input strings $Q$ and $T$, that is obtained by deleting all parts of $Q$, or equivalently all parts of $T$, that are not \emph{mutually covered} by chain $A'$ (to be defined below). 
Here anchors are assumed to be exact matches $(x,i,\kappa')$ (not necessarily maximal) such that $Q[x..x+\kappa'-1]=T[i..i+\kappa'-1]$, and $A'[j]<A'[j+1]$ for $1\leq j<N'$, where the order $<$ between anchors is defined as  $(x',i',\kappa'')<(x,i,\kappa')$ iff $x'\leq x$ and $i'\leq i$.
For completeness, in \Cref{app:proof-lcs-string} we include a revised proof that this algorithm computes the length of a longest common subsequence of strings $Q$ and $T$ if it is given all (string) MEMs between $Q$ and $T$ as input~\cite{MS20}.
The concept of mutual coverage~\cite[Problem 1]{MS20} is defined through the score
\[
\mathtt{coverage}(A')=\sum_{j=1}^{N'} \min_{\begin{array}{c} (i,x,\kappa'):=A'[j+1],\\(i',x',\kappa''):=A'[j] \end{array}} \left\{\begin{array}{l} \min(i,i'+\kappa'')-i', \\ 
\min(x,x'+\kappa'')-x',
\end{array}\right.
\]
where $A'[N'+1]=(\infty,\infty,0)$.
Each part of the sum contributes the corresponding number of character matches from the beginning of the anchors to the induced common subsequence. These form the mutually covered part of the inputs; see \Cref{fig:subpath} for an illustration on an extension of this concept to graphs.

Consider now the symmetric chaining problem between a DAG and a string: 

\begin{problem}[Symmetric DAG chaining with overlaps]
Find a symmetric chain $A'[1..N']$ of a set $A$ of $N$ anchors between a labeled DAG $G=(V,E,\ell)$ and a query string $Q$ maximizing the length of an induced common subsequence $C=P \mid A'=Q \mid A'$ for some path $P$ of $G$, where $P \mid A'$ denotes the subsequence obtained by deleting the parts of $\ell(P)$ that are not mutually covered by chain $A'$ and $Q \mid A'$ denotes the subsequence obtained by deleting the parts of $Q$ not mutually covered by chain $A'$.
\label{prob:symmetricdagchaining}
\end{problem}

In this section, we will solve this problem in the special case where the anchors are all node MEMs between $G$ and $Q$: thanks to \Cref{theorem:longMEMs} we know that the algorithm by M\"akinen et al. \cite{Maketal19} solves the problem when a longest induced common subsequence $C$ is covered by node MEMs that appear in different nodes. Since in our setting the overlaps can only occur inside node labels, we are left with what essentially is the symmetric string-to-string chaining problem \cite{SK03,MS20}. However, we cannot separate these subproblems and call the respective algorithms as black boxes, but instead we need to carefully interleave the computation of both techniques in one algorithm.

\subsection{DAG chaining with node MEMs\label{sect:DAGCLCrevisited}} 

\Cref{algo:colinearchainingDAGkpathcover} shows the pseudocode of~\cite[Algorithm 1]{Maketal19} simplified to take node MEMs as anchors. The original algorithm uses two arrays to store the start and the end nodes of anchor paths, but in the case of node MEMs one array suffices. We also modified~\cite[Lemma 3.2]{Maketal19} below to explicitly use primary and secondary keys (the original algorithms~\cite{Maketal19,MS20} implicitly assumed distinct keys). We still use primary keys to store MEM ending positions in $Q$ to do range searches, and we use the secondary key to store the MEM identifiers to update the values of the corresponding anchors.

\begin{algorithm}[htp]
\KwIn{A DAG $G=(V,E,\ell)$, a query string $Q$, a path cover $P_1,P_2,\ldots, P_k$ of $G$, and node MEMs $A[1..N]$ of the form $([x..x+\kappa'-1], (i,v,i+\kappa'-1))$, where $\ell(v)[i..i+\kappa'-1]=Q[x..x+\kappa'-1].$}
\KwOut{Index of a MEM ending at a chain with maximum coverage $\max_j C[j]$ allowing at most one MEM per node of $G$.}
Use \Cref{lemma:forward} to find all forward propagation links\;
\For{$k' \gets 1$ to $k$}{
  Initialize data structures $\mathcal{T}^a_{k'}$ and $\mathcal{T}^b_{k'}$ with keys $(x+\kappa'-1,j)$ such that $([x..x+\kappa'-1],(i,v,i+\kappa'-1))= A[j]$, $1 \leq j \leq N$, and with key $(0,0)$, all keys associated with values $-\infty$\;
  $\mathcal{T}^a_{k'}. \mathsf{update}((0,0),0)$\; 
  $\mathcal{T}^b_{k'}. \mathsf{update}((0,0),0)$\; 
}
\tcc{Save to $\mathtt{anchors}[v]$ all node MEMs of node $v$.}
\For{$j \gets 1$ to $N$}{
    $([x..x+\kappa'-1],(i,v,i+\kappa'-1))=A[j]$\;
	$\mathtt{anchors}[v].\mathsf{push}(j)$\;
    $C^-[j] \gets 0$\;
    $C [j] \gets \kappa'$\;
}

\For{$v \in V$ in topological order}{
    \For{$j \in \mathtt{anchors}[v]$}{
       \tcc{Update the data structures for every path that covers $v$, stored in $\mathtt{paths}[v]$.}
       $([x..x+\kappa'-1],(i,v,i+\kappa'-1))=A[j]$\;       
       \For{$k' \in \mathtt{paths}[v]$}{
          $\mathcal{T}^a_{k'}. \mathsf{upgrade}((x+\kappa'-1,j),C[j])$\;
          $\mathcal{T}^b_{k'}. \mathsf{upgrade}((x+\kappa'-1,j),C^-[j]-x)$\;
       }
    }
   \tcc{\textbf{PROPAGATE FORWARD STARTS}}
   \For{$(w,k') \in \forward[v]$}{ 
      \For{$j \in \mathtt{anchors}[w]$}{
           $([x..x+\kappa'-1], (i,v,i+\kappa'-1))=A[j]$\;       
           $C^\mathtt{a}[j] \gets \mathcal{T}^a_{k'}.\mathsf{RMaxQ}(0,x-1)$\;
           $C^\mathtt{b}[j] \gets x+\mathcal{T}^b_{k'}.\mathsf{RMaxQ}(x,x+\kappa'-1)$\;
           $C^-[j] \gets \max(C^-[j],C^\mathtt{a}[j],C^\mathtt{b}[j])$\;
           $C[j] = C^-[j]+\kappa'$\; 
       }
    }
   \tcc{\textbf{PROPAGATE FORWARD ENDS}}
}
\Return{$\argmax_j C[j]$}\;
\caption{\label{algo:colinearchainingDAGkpathcover}
Asymmetric co-linear chaining between a sequence and a DAG using a path cover and node MEMs.}
\end{algorithm}

Just like the original algorithm, our simplified version fills a table $C[1..N]$ such that $C[j]$ is the maximum coverage of an asymmetric chain that uses the $j$-th node MEM as its last item. That is, there is an asymmetric chain that induces a subsequence $Q'$ of the query $Q$ of length $C[j]$. In addition, our simplified version is restricted to chains that can include at most one MEM per node and includes an intermediate step to fill table $C^-[1..N]$ such that $C^-[j]=C[j]-\kappa'$, where $\kappa'$ is the length of the $j$-th node MEM. The reason for these modifications will become clear when we integrate the algorithm with the symmetric string-to-string chaining.

To fill tables $C[1..N]$ and $C^-[1..N]$, the algorithm considers a) MEMs from different nodes without overlap in the query and b) MEMs from different nodes with overlap in the query. These cases are illustrated in the left panel of \Cref{fig:cases}. The algorithm maintains the following data structure for each case and for each path in a given path cover of $k$ paths (see e.g.\ {\cite[Chapter 5]{de2000computational}}):

\begin{lemma}
\label{lemma:searchtree}
The following four operations can be supported with a balanced binary search tree $\mathcal{T}$ in time $O(\log n)$, where $n$ is the number of key-value pairs $((k,j), \mathtt{val})$ stored in the tree. Here $k$ is the primary key, $j$ is the secondary key to break ties, and $k,j,\mathtt{val}$ are integers.
\begin{itemize}
\item $\mathsf{value}(k,j)$: Return the value associated to key $(k,j)$ or $-\infty$ if $(k,j)$ is not a proper key. 
\item $\mathsf{update}((k,j),\mathtt{val})$: Associate value $\mathtt{val}$ to key $(k,j)$.
\item $\mathsf{upgrade}((k,j),\mathtt{val})$: Associate value $\max(\mathtt{val},\mathsf{value}(k,j))$ to key $(k,j)$.
\item $\mathsf{RMaxQ}(l,r)$: Return $\max_{l\leq k \leq r, (k,j) \text{ is a key in } \mathcal{T}} \mathtt{value}(k,j)$ (\emph{Range Maximum Query}), or $-\infty$ if the range is empty.
\end{itemize}
Moreover, the balanced binary search tree can be constructed in $O(n)$ time, given the $n$ pairs $((k,j),\mathtt{val})$ sorted by component $(k,j)$.  
\end{lemma}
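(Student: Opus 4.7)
The plan is to implement $\mathcal{T}$ as a balanced binary search tree ordered lexicographically by the composite key $(k,j)$, where each internal node is augmented with one extra field: the maximum value stored in its subtree. Since all keys that will ever be inserted are known at construction time (the $n$ pairs given as input), I would use a static balanced tree (e.g.\ a weight-balanced or perfectly-balanced BST built once from the sorted list), so that the tree shape never changes and $\mathsf{update}$/$\mathsf{upgrade}$ only modify a single leaf and refresh augmentations along its root-to-leaf path. The initial construction in $O(n)$ time proceeds bottom-up on the sorted array: recursively place the middle element as the root, recurse on the two halves, and compute each node's subtree-max as the maximum of its own value and those of its children.

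For the three point operations I would descend from the root using lexicographic comparison on $(k,j)$: $\mathsf{value}(k,j)$ simply returns the value stored at the located leaf (or $-\infty$ if absent), $\mathsf{update}$ overwrites it, and $\mathsf{upgrade}$ writes the maximum of the old and new values. In the write cases I would then walk back up the (implicit) path, recomputing the subtree-max field at each ancestor in constant time per node. Because the tree is balanced, the descent and the update walk both take $O(\log n)$ time. The secondary key $j$ does not complicate anything: it only serves as a tie-breaker in the lexicographic order, so point queries still land on a unique leaf.

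For $\mathsf{RMaxQ}(l,r)$ I would run the standard two-path procedure for range maxima in a BST: locate the split node where the search paths to $(l,-\infty)$ and $(r,+\infty)$ diverge, then follow the two paths down. Along the left path, whenever the path goes left, the right subtree of the current node lies fully inside $[l,r]$ in the primary coordinate, so I harvest its subtree-max; symmetrically on the right path. The answer is the maximum of all harvested subtree-maxes together with the values at the path nodes themselves whose primary key lies in $[l,r]$. Correctness follows from the fact that the harvested subtrees exactly partition the set of keys whose primary key is in $[l,r]$, regardless of their secondary key $j$ (since the range is only constrained on the primary coordinate). Each of the two paths has length $O(\log n)$ and contributes $O(1)$ work per level, giving the claimed $O(\log n)$ bound.

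The only subtle point, and the one I would treat carefully, is the interaction between the composite-key ordering and the primary-key-only range query. The argument is that for a fixed primary value $k$, all keys $(k,\cdot)$ form a contiguous range in lexicographic order, so the set $\{(k,j) : l \le k \le r\}$ is itself a contiguous interval in the tree; therefore the standard BST range-decomposition applied with boundaries $(l,-\infty)$ and $(r,+\infty)$ returns exactly that interval, and the subtree-max augmentation yields its maximum value. This is the key observation that makes the $O(\log n)$ bound survive the passage from distinct keys (as in the original statements of \cite{Maketal19,MS20}) to our composite-key setting.
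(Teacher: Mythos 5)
Your proposal is correct and matches the standard augmented balanced BST construction that the paper implicitly relies on (it cites \cite[Chapter 5]{de2000computational} and gives no proof of its own): subtree-max augmentation, $O(\log n)$ root-to-leaf point operations, the two-path range decomposition for $\mathsf{RMaxQ}$, and linear-time bottom-up construction from the sorted key list. Your explicit treatment of the composite key $(k,j)$ — observing that the keys with primary component in $[l..r]$ form a contiguous lexicographic interval, so querying with boundaries $(l,-\infty)$ and $(r,+\infty)$ is sound — is exactly the detail the lemma's tie-breaking secondary key requires, and is handled correctly.
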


The algorithm processes the nodes in topological order, keeping the invariant that once node $v$ is visited, the final values $C[j]$ and $C^-[j]$ are known for all anchors $j$ included in node $v$. These values are then stored in the search trees.
As a final step in the processing of $v$, the information stored in the search trees is propagated forward to nodes $w$, where $v$ is the last node reaching $w$ on some path-cover path. This propagated information is used for updating the intermediate values for MEMs at node $w$.
These forward links are preprocessed with the following lemma:

\begin{lemma}[Adaptation of {~\cite[Lemma 3.1]{Maketal19}}]
\label{lemma:forward}
Let $G = (V,E)$ be a DAG, and let $P_1,\dots,P_k$ be a path cover of $G$. We can compute in $O(k^2|V|)$ time the set of \emph{forward propagation links} $\forward[u]$ defined as follows: for any node $v$ and path $k'$, $(v,k') \in \forward[u]$ if and only if
 $u$ is the last node on path $k'$ that reaches $v$ such that $u \neq v$.
\end{lemma}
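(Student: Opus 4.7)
The plan is to reduce the computation of $\forward$ to a standard ``last-to-reach'' dynamic program and then invert it. For every node $v$ and every path index $k' \in [1..k]$, let $L[v][k']$ denote the latest node on $P_{k'}$ that reaches $v$ (including $v$ itself if $v \in P_{k'}$), or $\nil$ if no node of $P_{k'}$ reaches $v$. The connection to $\forward$ is immediate: if $v \notin P_{k'}$ and $L[v][k']=u \neq \nil$, then $(v,k') \in \forward[u]$; if $v \in P_{k'}$, then the last node of $P_{k'}$ reaching $v$ other than $v$ is the $P_{k'}$-predecessor $u$ of $v$ (if it exists), so $(v,k') \in \forward[u]$. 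Thus, once $L$ is computed, a single pass over all pairs $(v,k')$ produces $\forward$ in $O(k|V|)$ additional time.

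To compute $L$, I process the nodes of $V$ in topological order and use the recurrence
\[
L[v][k'] = \begin{cases} v & \text{if } v\in P_{k'},\\ \max_{P_{k'}}\{L[u][k'] : (u,v)\in E\} & \text{otherwise,}\end{cases}
\]
where $\max_{P_{k'}}$ denotes the latest node along $P_{k'}$. The naive cost is $\Theta(k|E|)$, which I improve with the following monotonicity observation. If $u$ and $u'$ are both in-neighbors of $v$ and both lie on the same path $P_{k''}$ with $u$ preceding $u'$, then $u$ reaches $u'$; consequently every node that reaches $u$ also reaches $u'$, so $L[u'][k'] \succeq_{P_{k'}} L[u][k']$ for every $k'$. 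Hence, for each path $k''$, only the $P_{k''}$-latest in-neighbor of $v$, call it $\mathtt{critical}[v][k'']$, can contribute to any $L[v][\cdot]$. There are at most $k$ such critical in-neighbors per node, so computing $L[v][k']$ for all $k'$ costs $O(k^2)$ per node, totalling $O(k^2|V|)$. Tie-breaking between nodes on the same path uses the precomputed position of each node in its paths, which is built in $O(k|V|)$ preprocessing.

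The principal obstacle is producing the $\mathtt{critical}[v][k'']$ table within the claimed budget, since iterating over every edge naively costs $\Theta(k|E|)$ and $|E|$ is not included in the stated bound. I would assume, as in Makinen et al., that the algorithm operates on a transitively-reduced sparsification of $G$ associated with the given path cover, which preserves all reachability and has $O(k|V|)$ edges; then populating $\mathtt{critical}$ by scanning edges costs $O(k\cdot k|V|)=O(k^2|V|)$, absorbed by the same bound. Under this model the overall cost is $O(k^2|V|)$ for producing $L$, plus $O(k|V|)$ to invert it into $\forward$, proving the claim. A routine inductive verification along the topological order, using the monotonicity lemma above, confirms correctness of the recurrence and thus of the $\forward$ sets obtained by inversion.
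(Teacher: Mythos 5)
Your proposal is correct, and it essentially reconstructs from first principles what the paper dispatches by citation. The paper's own proof is a three-step reference chain: the $O(k|E|)$ dynamic program of M\"akinen et al., its speed-up to $O(k|E_{red}|)$ by Kritikakis and Tollis, and the bound $|E_{red}| \le k|V|$ from the transitive sparsification of C\'aceres et al. Your $L[v][k']$ table is exactly the $\lasttoreach$ table of the original DP, your inversion into $\forward[u]$ is the standard one, and your monotonicity observation---that among in-neighbors of $v$ lying on the same cover path only the latest can determine any $L[v][\cdot]$---is precisely the content of the transitive sparsification: keeping, for each node and each cover path, only the edge from the latest in-neighbor on that path preserves reachability and leaves at most $k|V|$ edges. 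So the two arguments buy the same thing by the same mechanism; yours has the merit of being self-contained and of making the correctness of the restriction to critical in-neighbors explicit, while the paper's is shorter but opaque without the cited works. You also correctly identify the one genuine subtlety, namely that building the $\mathtt{critical}[v][k'']$ table by scanning $E$ costs $\Theta(k|E|)$, which the stated $O(k^2|V|)$ bound cannot absorb; your resolution (assume the algorithm is handed a transitively sparsified graph with $O(k|V|)$ edges) is exactly the assumption hidden in the paper's citation chain, and it is legitimate in context because the sparsification is produced as a byproduct of the $O(k^2|V|+|E|)$ minimum path cover computation that the overall algorithm already performs. The only minor point worth double-checking in your write-up is the inversion for $v \in P_{k'}$: your claim that the last node of $P_{k'}$ reaching $v$ other than $v$ itself is the $P_{k'}$-predecessor of $v$ is correct, but only because $G$ is a DAG (no node after $v$ on $P_{k'}$ can reach $v$), which deserves a word of justification.
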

\begin{proof}
    The original DP algorithm~\cite{Maketal19} runs in $O(k|E|)$ time, but recently it has been shown~\cite[Algorithms 6 and 7]{kritikakis2022fast} how to do this in time $O(k|E_{red}|)$, where $E_{red}$ are the edges in the transitive reduction of $G$. Finally, C\'aceres et~al.~\cite{caceres2022minimum,Arieletal22} showed a transitive sparsification scheme proving that $|E_{red}| \le k|V|$.
\end{proof}

Data structures $\mathcal{T}^a_{k'}$ store as primary keys all ending positions of MEMs in $Q$ and as values the corresponding $C[j]$s for node MEMs $A[j]$ processed so far and reaching path $P_{k'}$ (line 15). When a new node MEM is added to a chain at line 20, the range query on $\mathcal{T}^a_{k'}$ guarantees that only chains ending before $v$ in $G$ and before the start of the new node MEM in $Q$ are taken into account. Data structures $\mathcal{T}^b_{k'}$ also store as primary keys all ending positions of node MEMs in $Q$, but as values they store the values $C^-[j]$ with an invariant subtracted (line 16). This invariant is explained by the range query at line 21, that considers chains overlapping (only) in $Q$ with the new node MEM to be added: consider the chain ending at node MEM $A[j']=([x'..x'+\kappa''-1],(i',v',i'+\kappa''-1))$ and the new node MEM $A[j]=([x..x+\kappa'-1],(i,v,i+\kappa'-1)$ is to be added to this chain, where $x\leq x'+\kappa''-1\leq x+\kappa'-1$. This addition increases the part of $Q$ covered by the chain (excluding the new node MEM) by $x-x'$. This is exactly the value computed at line 21, maximizing over such overlapping node MEMs.

\begin{figure}
    \centering
    \includegraphics[width=.7\textwidth]{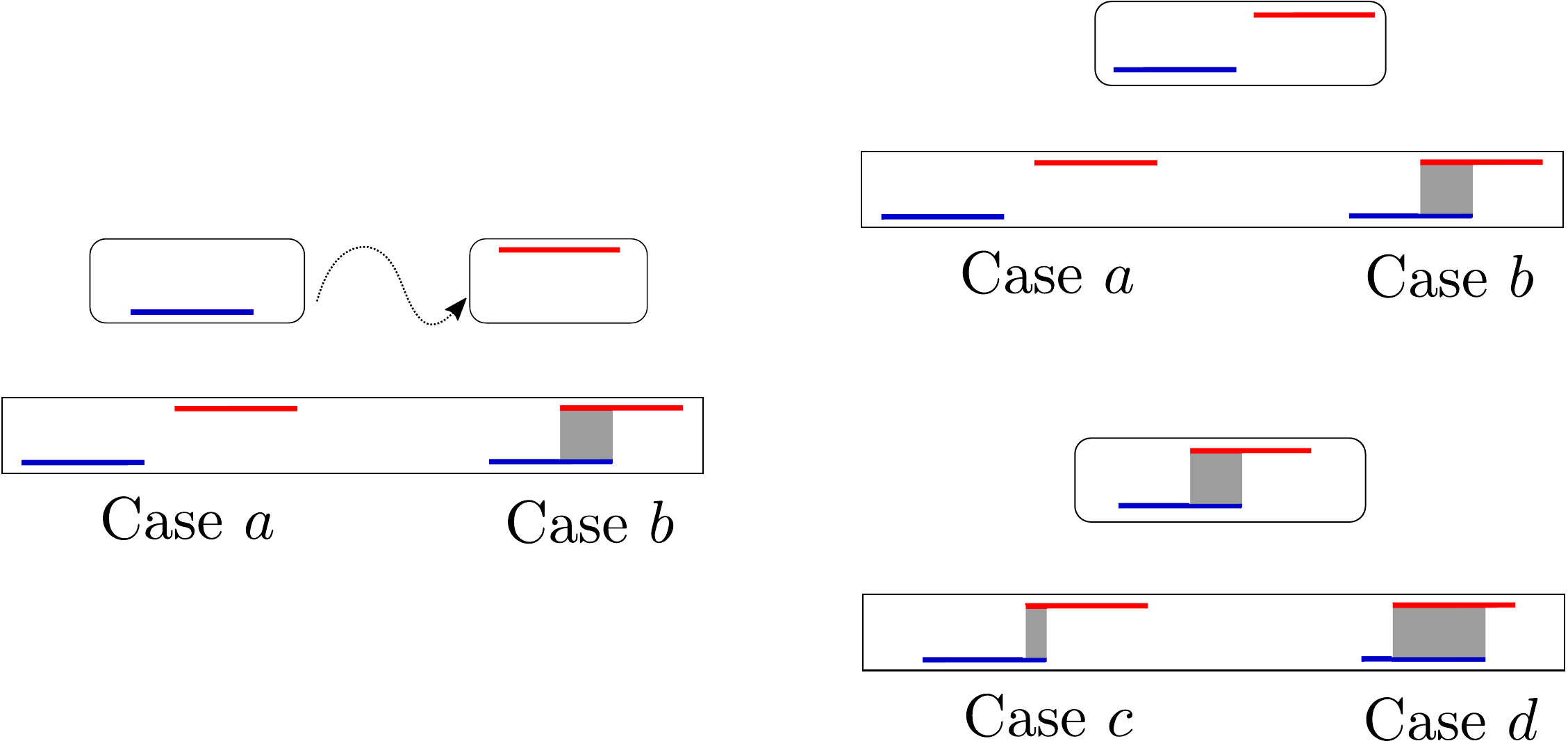}
    \caption{Precedence of MEMs partitioned to three classes (left, top right, and bottom right subfigures) by occurrence in graph/text (top part of each subfigure) and thereafter to two out of total four cases that require different data structure on the query (bottom part of each subfigure). }
    \label{fig:cases}
\end{figure}

\subsection{Revisiting symmetric string-to-string chaining with MEMs\label{sect:symmetricCLCrevisited}} 

Before modifying the algorithm to properly consider overlaps of node MEMs in $G$, let us first modify the symmetric string-to-string chaining algorithm of M\"akinen and Sahlin~\cite[Algorithm 2]{MS20} to harmonize the notation and to consider the simplification of \cite[Theorem 6]{MS20} that applies in the case of (string) MEMs.
This modification computes the optimal chain given MEMs $A[1..N]$ between strings $T$ and $Q$ and is given as \Cref{algo:colinearchainingTwoSidedOverlaps}.

\begin{algorithm}[H]
\KwIn{An array $A[1..N]$ of (string) MEMs $(x,i,\kappa')$ between $Q$ and $T$.}
\KwOut{Index of a MEM ending a chain with maximum coverage $\max_j C[j]$.}
Initialize data structures $\mathcal{T}^\mathtt{a}$ and $\mathcal{T}^\mathtt{b}$ with keys $(x+\kappa'-1,j)$ and data structures  $\mathcal{T}^\mathtt{c}$ and $\mathcal{T}^\mathtt{d}$ with keys $(x-i,j)$, where $(x,i,\kappa')=A[j]$, $1 \leq j \leq N$, and all trees with key $(0,0)$. Associate values $-\infty$\ to all keys.\\
$\mathcal{T}^\mathtt{a}.\mathsf{upgrade}((0,0),0)$\; 
$M=\{(x,j) \mid (x,i,\kappa')=A[j], 1\leq j \leq N\} \cup \{(x+\kappa'-1,j) \mid (x,i,\kappa')=A[j], 1\leq j \leq N\}$\; 
$M.sort()$\; 
\For{$(x',j) \in M$}{
  $(x,i,\kappa')=A[j]$\;
  \If{$x==x'$}{ \tcc{Start of MEM.}
      $C^\mathtt{a}[j] = \mathcal{T}^\mathtt{a}.\mathsf{RMaxQ}(0,x-1)$\;
      $C^\mathtt{b}[j] = x+ \mathcal{T}^\mathtt{b}.\mathsf{RMaxQ}(x,x+\kappa'-1)$\;
      \HiLi $C^\mathtt{c}[j] = i+ \mathcal{T}^\mathtt{c}.\mathsf{RMaxQ}(-\infty,x-i)$\;
      \HiLi  $C^\mathtt{d}[j] = x+ \mathcal{T}^\mathtt{d}.\mathsf{RMaxQ}(x-i+1,\infty)$\;      
      $C^-[j] = \max(C^\mathtt{a}[j],C^\mathtt{b}[j],C^\mathtt{c}[j],C^\mathtt{d}[j])$\;
      $C[j] = C^-[j]+\kappa'$\;
      \HiLi $\mathcal{T}^\mathtt{c}.\mathsf{upgrade}((x-i,j),C^-[j]-i)$\;
      \HiLi $\mathcal{T}^\mathtt{d}.\mathsf{upgrade}((x-i,j),C^-[j]-x)$\;
  }
  \Else{ \tcc{End of MEM.}
    $\mathcal{T}^\mathtt{a}.\mathsf{upgrade}((x+\kappa'-1,j),C[j])$\;
    $\mathcal{T}^\mathtt{b}.\mathsf{upgrade}((x+\kappa'-1,j),C^-[j]-x)$\;
    \HiLi $\mathcal{T}^\mathtt{c}.\mathsf{update}((x-i,j),-\infty)$\;
    \HiLi $\mathcal{T}^\mathtt{d}.\mathsf{update}((x-i,j),-\infty)$\;
  }
}
\Return{$\argmax_j C[j]$}\;
\caption{\label{algo:colinearchainingTwoSidedOverlaps}
Symmetric chaining with two-sided overlaps using MEMs.}
\end{algorithm}

The algorithm uses the same two data structures as before to handle the cases illustrated at the top right of \Cref{fig:cases}. Moreover, the two additional data structures (balanced binary search trees) in \Cref{algo:colinearchainingTwoSidedOverlaps} handle the overlaps in $T$ by dividing the computation further into cases c) and d) illustrated at the bottom right of \Cref{fig:cases}): c) if two MEMs overlap more in $T$ than in $Q$, tree $\mathcal{T}^\mathtt{c}$ is used for storing the solution; d) otherwise, tree $\mathcal{T}^\mathtt{d}$ is used for storing the solution.
We refer to the original work \cite{MS20} for the derivation of the invariants and the range queries to handle these cases. The handling of these cases is highlighted with gray background in \Cref{algo:colinearchainingTwoSidedOverlaps}. 

\subsection{Integration of symmetry to DAG chaining\label{sect:symmetricDAGCLCintegration}} 

We will now merge the two algorithms from previous subsections to solve Problem~\ref{prob:symmetricdagchaining}. This algorithm is shown as \Cref{algo:colinearchainingDAGTwoSidedOverlaps}; lines highlighted with a dark gray background are from \Cref{algo:colinearchainingTwoSidedOverlaps}, whereas lines highlighted with a light gray background are a hybrid of both, and the rest are from \Cref{algo:colinearchainingDAGkpathcover}. When visiting node $v$ the algorithm executes the steps of \Cref{algo:colinearchainingTwoSidedOverlaps} on anchors included in $v$, with $C^\mathtt{a}[j]$ and $C^\mathtt{b}[j]$ having already been updated with anchors not included in $v$ through forward propagation identical to \Cref{algo:colinearchainingDAGkpathcover}. The hybrid parts reflect the required changes to \Cref{algo:colinearchainingDAGkpathcover} in order to visit the MEM anchors twice as in \Cref{algo:colinearchainingTwoSidedOverlaps}. This merge covers all three cases of \Cref{fig:cases}. 

\begin{theorem}
Given labeled DAG $G=(V,E,\ell)$ with path cover $P_1$, \dots, $P_k$, query string $Q$, and set $A[1..N]$ of node MEMs between $Q$ and $G$,
\Cref{algo:colinearchainingDAGTwoSidedOverlaps} solves the symmetric DAG chaining with overlaps problem (\Cref{prob:symmetricdagchaining}) in time $O(k^2|V|+kN \log N)$.
\label{thm:DAGLCS}
\end{theorem}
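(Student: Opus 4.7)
The plan is to prove correctness by induction on a topological ordering of $V$, and to bound the running time by separately accounting for the forward-propagation preprocessing and the tree operations.

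For correctness, I would maintain the invariant that when Algorithm~\ref{algo:colinearchainingDAGTwoSidedOverlaps} finishes processing node $v$, for every anchor $j$ whose node is $v$, the value $C[j]$ equals the maximum $\mathtt{coverage}$ of a symmetric chain ending with $A[j]$, and $C^-[j]=C[j]-\kappa'$ is the coverage of the same chain without its last anchor. The case analysis splits on the relation between $A[j]$ and its predecessor $A[j']=([x'..x'+\kappa''-1],(i',v',i'+\kappa''-1))$ in an optimal chain. Either (i) $v'\neq v$, in which case by the symmetric ordering $v'$ strictly reaches $v$; or (ii) $v'=v$ and $i'\leq i$. Case (i) breaks down further by whether $A[j]$ and $A[j']$ overlap in $Q$ or not, exactly the two sub-cases handled by $\mathcal{T}^a_{k'}$ and $\mathcal{T}^b_{k'}$ in \Cref{algo:colinearchainingDAGkpathcover}; since the propagation from \Cref{lemma:forward} pushes the current best $C[j']$ and $C^-[j']-x'$ values along every path-cover path that continues from $v'$ to $v$, the range queries at lines corresponding to $C^\mathtt{a}[j]$ and $C^\mathtt{b}[j]$ correctly retrieve the optimum predecessor of type (i). Case (ii) corresponds exactly to the string-to-string chaining setting of \Cref{algo:colinearchainingTwoSidedOverlaps} on the label $\ell(v)$: the trees $\mathcal{T}^\mathtt{c}$ and $\mathcal{T}^\mathtt{d}$ implement the two sub-cases (overlap larger in the node than in $Q$, or vice versa) with the diagonal-offset key $x-i$, and the algorithm visits each in-node anchor twice (start/end events) so that only anchors starting before $A[j]$ on the same node have contributed. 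Combining the maximum over the four sub-cases correctly computes $C^-[j]$. To conclude, by \Cref{theorem:longMEMs} any MEM spanning several nodes is a perfect chain of node MEMs, so restricting to node-MEM anchors does not lose optimality, and the returned $\argmax_j C[j]$ indexes a chain whose coverage equals the length of a longest induced common subsequence.

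For the running time, the forward-propagation links are computed in $O(k^2|V|)$ via \Cref{lemma:forward}. Each anchor $A[j]$ contributes $O(1)$ start/end events processed on the in-node structures $\mathcal{T}^\mathtt{c},\mathcal{T}^\mathtt{d}$, each costing $O(\log N)$ by \Cref{lemma:searchtree}. The same anchor is added once to each of $\mathcal{T}^a_{k'},\mathcal{T}^b_{k'}$ for every path $P_{k'}$ covering its node, i.e.\ at most $k$ paths, and is inspected by forward propagation at most $k$ times (once per $(w,k')\in\forward[v]$ with the same $k'$); each such touch is again an $O(\log N)$ tree operation. Initialization of the $2k$ trees with $N$ keys each can be done in $O(kN)$ time using the sorted-input construction of \Cref{lemma:searchtree} (after a single $O(N\log N)$ sort, amortized over $k$). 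Summing gives $O(k^2|V|+kN\log N)$, absorbing the $O(kN)$ initialization into the second term.

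The main obstacle I expect is the interleaving in case (ii): we must argue that when an in-node predecessor $A[j']$ at node $v$ is used to update $A[j]$, the value $C[j']$ stored in $\mathcal{T}^\mathtt{c}/\mathcal{T}^\mathtt{d}$ is already finalized, i.e., that the start/end events on $\ell(v)$ are processed in an order consistent with the symmetric ordering $<$ on anchors sharing $v$. This requires sorting the in-node events by the primary query coordinate $x$ (with ties broken appropriately) exactly as in \Cref{algo:colinearchainingTwoSidedOverlaps}, and verifying that the forward-propagation updates for $v$ (which touch only trees $\mathcal{T}^a_{k'},\mathcal{T}^b_{k'}$) do not interfere with the in-node trees. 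Once this event ordering is established and combined with the inductive hypothesis for predecessors in nodes that topologically precede $v$, the four-way maximum $C^-[j]=\max(C^\mathtt{a},C^\mathtt{b},C^\mathtt{c},C^\mathtt{d})$ faithfully realizes the recursion induced by the $\mathtt{coverage}$ score, and the theorem follows.
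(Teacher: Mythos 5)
Your proposal is correct and follows essentially the same route as the paper, which itself only justifies \Cref{thm:DAGLCS} informally by describing the merge of \Cref{algo:colinearchainingDAGkpathcover} and \Cref{algo:colinearchainingTwoSidedOverlaps}: cross-node predecessors handled by forward propagation into $\mathcal{T}^\mathtt{a}_{k'},\mathcal{T}^\mathtt{b}_{k'}$, same-node predecessors handled by the start/end event sweep over $\mathcal{T}^\mathtt{c},\mathcal{T}^\mathtt{d}$, and the four-way maximum realizing the $\mathtt{coverage}$ recursion, with the $O(k^2|V|+kN\log N)$ bound obtained exactly as you account for it. The only cosmetic discrepancy is that in \Cref{algo:colinearchainingDAGTwoSidedOverlaps} the trees $\mathcal{T}^\mathtt{c}_{k'},\mathcal{T}^\mathtt{d}_{k'}$ are also instantiated per path-cover path and the in-node event loop runs once per $k'\in\mathtt{paths}[v]$, so those operations are charged $O(k\log N)$ per anchor rather than $O(\log N)$, which does not change the stated bound.
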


\begin{corollary}\label{cor:lcs}
The length of a longest common subsequence (LCS) between a path in a labeled DAG $G=(V,E,\ell)$ and string $Q$ can be computed in time $O(n+m+k^2|V|+|E|+ kN\log N)$, where $m=|Q|$, $n$ is the total length of node labels, $k$ is the width (minimum number of paths covering the nodes) of $G$, and $N$ is the number of node MEMs. 
\end{corollary}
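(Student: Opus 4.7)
}
The plan is to pipeline the three ingredients that have already been established: (i) a minimum path cover of $G$, (ii) the set of node MEMs between $Q$ and $G$, and (iii) \Cref{algo:colinearchainingDAGTwoSidedOverlaps} together with \Cref{thm:DAGLCS}. Specifically, I would first compute a minimum path cover $P_1,\ldots,P_k$ of $G$ in time $O(k^2|V|+|E|)$ using the algorithm of \cite{Arieletal22,caceres2022minimum} cited at the start of \Cref{sect:clc}. Next, I would build the text $T_\text{nodes}=\prod_{v\in V}\mathbf{0}\cdot \ell(v)$ and invoke \Cref{lemma:nodeMEMs} to obtain the array $A[1..N]$ of node MEMs in time $O(n+m+N)$. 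Finally, I would feed $G$, $Q$, the path cover, and $A$ to \Cref{algo:colinearchainingDAGTwoSidedOverlaps}, which by \Cref{thm:DAGLCS} solves \Cref{prob:symmetricdagchaining} in time $O(k^2|V|+kN\log N)$. Summing the three costs yields the claimed bound $O(n+m+k^2|V|+|E|+kN\log N)$.

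What remains is the correctness step: the optimum of \Cref{prob:symmetricdagchaining} on the set of \emph{node} MEMs must coincide with $\max_P \mathrm{LCS}(Q,\ell(P))$ over paths $P$ of $G$. For the $\le$ direction, any symmetric chain $A'$ of node MEMs consists of matches in nodes $v_1,\ldots,v_{N'}$ that are pairwise reachable (in the order of the chain), so they lie on a common path $P$ of $G$; the induced common subsequence $Q\mid A'=\ell(P)\mid A'$ is then a common subsequence of $Q$ and $\ell(P)$, hence of length at most $\mathrm{LCS}(Q,\ell(P))$. For the $\ge$ direction, I would fix any path $P^\star$ maximizing $\mathrm{LCS}(Q,\ell(P^\star))$ and apply the string-to-string result recalled in \Cref{app:proof-lcs-string}: the optimal symmetric chain of \emph{string} MEMs between $Q$ and $\ell(P^\star)$ has coverage equal to $\mathrm{LCS}(Q,\ell(P^\star))$. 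Each such string MEM corresponds to a graph MEM $([x..y],(i,P,j))$ where $P$ is a subpath of $P^\star$, and by \Cref{theorem:longMEMs} it decomposes into a perfect chain of node MEMs in $A$. Concatenating these perfect decompositions in chain order produces a symmetric chain of node MEMs on $G$ with the same mutual coverage, so the optimum of \Cref{prob:symmetricdagchaining} is at least $\mathrm{LCS}(Q,\ell(P^\star))$.

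The main obstacle I anticipate is precisely this reduction from string MEMs on $\ell(P^\star)$ to node MEMs on $G$: I need to check that splicing a perfect decomposition into the surrounding chain preserves both the DAG reachability order and the mutual coverage score, i.e.\ that inserting consecutive node MEMs which exactly concatenate (no gaps in either coordinate) contributes the same amount to $\mathtt{coverage}(A')$ as the single longer graph MEM would. This follows by inspecting the definition of mutual coverage in \Cref{sect:clc}, since for a perfect chain the $\min$ in each summand equals the length of the corresponding node MEM, so the telescoping sum matches the length of the original graph MEM. With this observation in hand the two inequalities combine to give the corollary.
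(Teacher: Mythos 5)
Your pipeline and running-time accounting coincide with the paper's, and your $\le$ direction is fine. The genuine gap is in your $\ge$ direction. You propose to take the optimal chain of \emph{string} MEMs between $Q$ and $\ell(P^\star)$, decompose each of them into a perfect chain of node MEMs via \Cref{theorem:longMEMs}, and splice these decompositions together in chain order. This splicing does not in general yield a valid symmetric chain. The string-to-string formulation allows consecutive anchors to overlap in $T=\ell(P^\star)$, and such an overlap can straddle a node boundary. Concretely, take $\ell(v_1)=\mathtt{AB}$, $\ell(v_2)=\mathtt{CD}$, $Q=\mathtt{ABCDBCD}$: the string MEMs $(1,1,4)$ and $(5,2,3)$ form an optimal chain of coverage $4$, but after decomposition the last node MEM of the first anchor lives in $v_2$ while the first node MEM of the second anchor lives in $v_1$, and $v_1$ is neither equal to nor reachable from $v_2$, so condition (i) of the symmetric-chain order is violated. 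Your telescoping observation handles the interior of each perfect chain but says nothing about these junctions, where both the ordering constraint and the coverage accounting break down. Repairing this (reordering the node MEMs by node, dropping redundant pieces, and re-verifying the coverage) amounts to redoing the argument of \Cref{thm:linear-lcs} from scratch.

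The paper avoids this entirely by \emph{not} going through \Cref{theorem:longMEMs} for the LCS connection. Instead it invokes the anchor-restricted LCS machinery of \Cref{thm:linear-lcs} and \Cref{obs:LCS} directly with the node MEMs as the anchor set: the only fact needed is that every character match $Q[y_l]=\ell(P)[j_l]$ of an LCS is supported, at the same offset, by \emph{some} node MEM (extend the single-character match within the node label until string-maximal). \Cref{thm:linear-lcs} then guarantees the existence of a weak chain of these anchors with coverage equal to the LCS length, and its proof constructs that chain correctly, with no need to decompose long graph MEMs or to splice overlapping anchors. Note also that the anchors used there need not be MEMs of $\ell(P^\star)$ at all (the paper explicitly remarks that node MEMs are generally not MEMs between $Q$ and $\ell(P)$), so your intermediate claim that each string MEM of $\ell(P^\star)$ "corresponds to a graph MEM" is both unnecessary and, for non-source-to-sink paths, not quite accurate with respect to \Cref{def:mem}. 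You should replace the second half of your argument with an appeal to \Cref{thm:linear-lcs} and \Cref{obs:LCS}, together with the observation that \Cref{algo:colinearchainingDAGTwoSidedOverlaps} restricted to the anchors on $P^\star$ simulates \Cref{algo:colinearchainingTwoSidedOverlaps} on $\ell(P^\star)$.
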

\begin{proof}
The node MEMs can be computed in time $O(n+m+N)$ with \Cref{lemma:nodeMEMs}. A minimum path cover with $k$ paths can be computed in $O(k^2|V|+|E|)$ time~\cite{Arieletal22,caceres2022minimum}. Forward propagation links can be computed in $O(k^2|V|)$ time with \Cref{lemma:forward}. Finally, the term $kN\log N$ comes from \Cref{thm:DAGLCS}. The connection between LCS and solution to symmetric chaining follows with identical arguments as in the proof of \Cref{obs:LCS} in \Cref{app:proof-lcs-string}. If $P$ is a path containing an LCS of length $c$, then \Cref{algo:colinearchainingDAGTwoSidedOverlaps} finds a chain of coverage exactly $c$ as its execution considers the corresponding chain between $\ell(P)$ and $Q$ as done in \Cref{algo:colinearchainingTwoSidedOverlaps}. In this case node MEMs are not necessarily MEMs between $\ell(P)$ and $Q$, but exact matches supporting the necessary character matches, see \Cref{app:proof-lcs-string}. 
\end{proof}
Note that the LCS connection can be easily adapted for long MEMs spanning two or more nodes of $G$: \Cref{def:mem} considers all (string) MEMs between $Q$ and $\ell(P)$, for any arbitrary path $P$; we did not consider symmetric chains of long MEMs due to the difficulty of handling path overlaps efficiently (see also \cite{Maketal19}).

\begin{algorithm}[H]
\KwIn{Same as in \Cref{algo:colinearchainingDAGkpathcover}.}
\KwOut{Index of a MEM ending at a chain with maximum coverage $\max_j C[j]$ allowing overlaps in $G$.}
Use \Cref{lemma:forward} to find all forward propagation links.\\
\For{$k' \gets 1$ to $k$}{
Initialize data structures $\mathcal{T}^\mathtt{a}_{k'}$ and $\mathcal{T}^\mathtt{b}_{k'}$ with keys $(x+\kappa'-1,j)$ and key $(0,0)$, and data structures  $\mathcal{T}^\mathtt{c}_{k'}$ and $\mathcal{T}^\mathtt{d}_{k'}$ with keys $(x-i,j)$, where $([x..x+\kappa'-1], (i,v,i+\kappa'-1))=A[j]$, $1 \leq j \leq N$. Associate values $-\infty$\ to all keys.\\
$\mathcal{T}^a_{k'}. \mathsf{update}((0,0),0)$\; 
$\mathcal{T}^b_{k'}. \mathsf{update}((0,0),0)$\; 
}
Initialize arrays: $\mathtt{anchors}$, $C^-$ and $C$ as in \Cref{algo:colinearchainingDAGkpathcover}\; 
\For{$v \in V$ in topological order}{
   \HiLiHybrid $M=\{(x,j) \mid ([x..x+\kappa'-1], (i,v,i+\kappa'-1))=A[j], j\in  \mathtt{anchors}[v] \} \cup \{(x+\kappa'-1,j) \mid ([x..x+\kappa'-1], (i,v,i+\kappa'-1))=A[j],  j\in  \mathtt{anchors}[v]\}$\; 
   \HiLiHybrid $M.sort()$\; 
   \tcc{Update the data structures for every path that covers $v$, stored in $\mathtt{paths}[v]$.}
   \For{$k' \in \mathtt{paths}[v]$}{
      \HiLiHybrid \For{$(x',j) \in M$}{
         \HiLiHybrid $(x,i,\kappa')=A[j]$\;
         \HiLi \If{$x==x'$}{ \tcc{Start of MEM.}
             \HiLi $C^\mathtt{a}[j] = \mathcal{T}^\mathtt{a}_{k'}.\mathsf{RMaxQ}(0,x-1)$\;
             \HiLi $C^\mathtt{b}[j] = x+ \mathcal{T}^\mathtt{b}_{k'}.\mathsf{RMaxQ}(x,x+\kappa'-1)$\;
             \HiLi $C^\mathtt{c}[j] = i+ \mathcal{T}^\mathtt{c}_{k'}.\mathsf{RMaxQ}(-\infty,x-i)$\;
             \HiLi $C^\mathtt{d}[j] = x+ \mathcal{T}^\mathtt{d}_{k'}.\mathsf{RMaxQ}(x-i+1,\infty)$\; 
             \HiLi$C^-[j] = \max(C^-[j],C^\mathtt{a}[j], C^\mathtt{b}[j], C^\mathtt{c}[j],C^\mathtt{d}[j])$\;
             \HiLi $C[j] = C^-[j]+\kappa'$\;
             \HiLi $\mathcal{T}^\mathtt{c}_{k'}.\mathsf{upgrade}((x-i,j),C^-[j]-i)$\;
             \HiLi $\mathcal{T}^\mathtt{d}_{k'}.\mathsf{upgrade}((x-i,j),C^-[j]-x)$\;
         }
         \HiLiHybrid \Else{ \tcc{End of MEM.}
             \HiLiHybrid $\mathcal{T}^\mathtt{a}_{k'}.\mathsf{upgrade}((x+\kappa'-1,j),C[j])$\;
             \HiLiHybrid $\mathcal{T}^\mathtt{b}_{k'}.\mathsf{upgrade}((x+\kappa'-1,j),C^-[j]-x)$\;
             \HiLi $\mathcal{T}^\mathtt{c}_{k'}.\mathsf{update}((x-i,j),-\infty)$\;
             \HiLi $\mathcal{T}^\mathtt{d}_{k'}.\mathsf{update}((x-i,j),-\infty)$\;
        }
      }    
    }
    Execute \textbf{PROPAGATE FORWARD} subroutine of \Cref{algo:colinearchainingDAGkpathcover}\;
}
\Return{$\argmax_j C[j]$}\;
\caption{\label{algo:colinearchainingDAGTwoSidedOverlaps}
Symmetric co-linear chaining between a sequence and a DAG using a path cover and node MEMs.}
\end{algorithm}

\No{
\section{Experiments}

The most prominent application for chaining in graphs is to map long erroneous reads on the graph. Our LCS formulation is not optimal for this application, as it does not penalize gaps: a good quality local alignment could be artificially extended with distant MEM matches at both ends. To alleviate this deficiency, there have been recent developments to extend the chaining framework to consider gap costs for traditional read mapping~\cite{JGT22} as well as for sequence to graph alignment~\cite{GC23}. However, for the latter case it appears difficult to make such formulation fully symmetric due to there being exponential many paths between two anchors. Due to these considerations, we decided to implement a practical read alignment framework that combines ideas from our work with those of the earlier work. 

Our proposal includes the following components:
\begin{itemize}
    \item Construct an elastic founder graph $G$ from a multiple sequence alignment (MSA)~\cite{RM22a},
    \item find node MEMs between $G$ and a set of reads (\cref{sect:nodeMEMs}),
    \item project the matches to MSA coordinates,
    \item use gap-sensitive chaining to find an alignment between each read to the MSA~\cite{CJ23},
    \item and project the alignments back to the graph. 
\end{itemize}

\section*{Acknowledgments} 

This project has received funding from the European Union’s Horizon 2020 research and
innovation programme under the Marie Skłodowska-Curie grant agreement No.\ 956229, and from the Academy of Finland grants No.\ 352821 and 328877.
}

\section{Discussion\label{sect:discussion}}

In this paper, we focused on MEMs with no lower threshold on their length to achieve the connection with LCS. In practical applications, chaining is sped up by using as anchors only MEMs that are of length at least $\kappa$, a given threshold. Just finding all such $\kappa$-MEMs is a non-trivial problem and solvable in sub-quadratic time only on some specific graph classes~\cite{RCM23b}. However, once such $\kappa$-MEMs are found, one can split them to node-MEMs and then apply Algorithm~\ref{algo:colinearchainingDAGTwoSidedOverlaps} to chain them. The resulting chain optimizes the length $|C|$ of a longest common subsequence $C$ between the query $Q$ and a path $P$ such that each match $C[k]=Q[i_k]=\ell(P)[j_k]$ is supported by an exact match of length at least $\kappa$, where $1\leq k \leq |C|$, $i_1<i_2<\cdots <i_{|C|}$, and $j_1<j_2<\cdots <j_{|C|}$. That is, there is a $\kappa$-MEM $([x_k,y_k],[c_k,d_k])$ with respect to $Q$ and $\ell(P)$ s.t. $x_k\leq i_k \leq y_k$ and $c_k\leq j_k \leq d_k$ for each $k$. Additionally, Ma et~al.~\cite[Appendix C]{Ma2022.01.07.475257} showed that asymmetric co-linear chaining can be extended to graphs with cycles by considering the graph of the strongly connected components. In the extended version of this paper we will show how to combine our results to obtain symmetric chaining in general graphs.
\bibliography{correctedbiblio}

\clearpage
\appendix

\section{Chaining for longer MEMs}\label{app:longMEMs}

We now show that graph MEMs of \Cref{def:mem} can be captured simply by concatenating node MEMs. Given two graph substrings $(i,P=u_1..u_k,j)$ and $(i',P'=v_1..v_{k'},j')$, they can be concatenated into $(i,P,j) \cdot (i',P',j')$ only if one of the following two conditions hold: $(u_k,v_1) \in E$, $j = \lVert  u_k \rVert $, and $i' = 1$; or $u_k = v_1$ and $i' = j+1$.
In the former case, $(i,P,j) \cdot (i',P',j') \coloneqq (i, P \cdot P', j')$, whereas in the latter case $(i,P,j) \cdot (i',P',j') \coloneqq (i, u_1..u_{k-1} \cdot v_1..v_{k'}, j')$. We then say that two MEMs $([x..y], (i,P,j))$ and $([x'..y'], (i',P',j'))$ can be concatenated if $(i,P,j)$ can be concatenated to $(i',P',j')$ and $x' = y+1$, and in such case we analogously define $([x..y], (i,P,j)) \cdot ([x'..y'], (i',P',j') \coloneqq ([x..y'], (i,P,j) \cdot (i',P',j')$.

Let $A$ be the set of node MEMs found using algorithm of \Cref{sect:nodeMEMs}. We call a sequence of node MEMs $A'[1..p] \subseteq A$ a \emph{perfect chain} if $A'[j]$ can be concatenated to $A'[j+1]$ for $1\leq j < p$. Note that the concatenation of all such node MEMs in the perfect chain yields a longer exact match.

\begin{customthm}{1}
For every MEM $([x..y], (i,P,j))$ between $G$ and $Q$, there is a perfect chain $A'[1..p]\subseteq A$ such that $A'[1] \cdots A'[p] = ([x..y], (i,P,j))$.
\end{customthm}
\begin{proof}
Let path $P$ be spanning nodes $v_1$, $v_2$, $\ldots$, $v_p$ and spelling $\ell(v_1)[i..\lVert v_1 \rVert]$ $\ell(v_2) \cdots \ell(v_{p-1})$ $\ell(v_p)[1..j]$. That is, there exist exact matches $([i_1..i_2-1],(i,$ $v_1,\lVert v_1\rVert ))$, $([i_2..i_3-1],(1,v_2,\lVert v_2\rVert ))$, $\ldots$, $([i_{p-1}..i_{p}-1]$, $(1,v_{p-1},\lVert v_{p-1}\rVert ))$, $([i_p..$ $i_{p+1}-1],$ $(1,v_p,j))$ between $Q$ and $G$. It is clear that if those matches are node MEMs then they form a perfect chain as they can be concatenated. Indeed, matches $([i_l..i_{l+1}-1], (1,v_l,\lVert v_l\rVert ))$ are right-maximal for $1 < l < p$ since they end at the end of a node label. For the same reason: matches $([i_l..i_{l+1}-1], (1,v_l,\lVert v_l\rVert ))$ are left-maximal for $1 < l < p$; $([i_1..i_2-1], (i,v_1,\lVert v_1\rVert ))$ is right-maximal; $([i_p..i_{p+1}-1],(1,v_p,j))$ is left-maximal. Finally, if we suppose by contradiction that match $([i_1..i_2-1], (i,v_1,\lVert v_1\rVert ))$ ($([i_p..i_{p+1}-1],(1,v_p,j))$) can be extended to the left (right) to $([i_1-1..i_2-1], (i-1,v_1,\lVert v_1\rVert ))$ ($([i_p..i_{p+1}],(1,v_p,j+1))$) we contradict the maximality of $([x..y], (i,P,j))$.
\end{proof}

\begin{customcrl}{1}
The set $A$ is a \emph{compact representation} of the set $M$ of MEMs between query $Q$ and a labeled DAG $G=(V,E,\ell)$: it holds $|A|\leq \lVert M\rVert $, where $\lVert M\rVert $ is the length of the encoding of the paths in MEMs as the explicit sequence of its nodes. 
\end{customcrl}
\begin{proof}
The corollary follows from \Cref{theorem:longMEMs} and the fact that for every node MEM using node $v$ there is at least one MEM between $Q$ and $G$ whose path contains $v$. Indeed, $v$ can be used in multiple MEM paths.
\end{proof}

\section{Co-linear chaining on strings using MEMs \\ gives LCS}\label{app:proof-lcs-string}

We first prove~\cite[Theorem 7]{MS20}\footnote{We provide this proof for completeness since the original proof is incomplete as checked with co-author M\"akinen.}. A string $C[1..\ell]$ is an LCS of strings $Q$ and $T$ if it is a longest string that can be written as $C = Q[y_1]..Q[y_\ell] = T[j_1]..T[j_\ell]$ with $1 \le y_1 < .. < y_\ell \le |Q|$ and $1 \le j_1 < .. < j_\ell \le |T|$. Given a set $A$ of anchors being exact matches between $Q$ and $T$, we define an \emph{anchor-restricted LCS} if it is a longest string such that it can be written as before but additionally for every character match $Q[y_l] = T[j_l]$ there exists an anchor $(x_l,i_l,\kappa'_l) \in A$ such that $x_l \le y_l \le x_l+\kappa'_l-1$, $i_l \le j_l \le i_l+\kappa'_l-1$ (the anchor supports the character match) and $y_l-x_l = j_l-i_l$ (the match occurs within the same offset in the anchor). 

\begin{theorem}[{\cite[Theorem 7]{MS20}}]\label{thm:linear-lcs}
Given a set of anchors $A$ of exact matches between two strings $Q$ and $T$, the length of an anchored-restricted LCS equals the coverage of a maximum coverage chain under the co-linear chaining formulation of M\"akinen and Sahlin~\cite{MS20}.
\end{theorem}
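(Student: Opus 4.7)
The plan is to establish the equality by two inequalities. Write $L$ for the length of a longest anchor-restricted LCS and $C^*$ for the maximum coverage among chains $A' \subseteq A$. For a chain $A'=A'[1..N']$ with $A'[j]=(x_j,i_j,\kappa_j)$ and the sentinel $A'[N'+1]=(\infty,\infty,0)$, I abbreviate
\[
d_j \;=\; \min\bigl(\,\min(x_{j+1},\,x_j+\kappa_j)-x_j,\; \min(i_{j+1},\,i_j+\kappa_j)-i_j\,\bigr),
\]
so that $\mathtt{coverage}(A')=\sum_{j=1}^{N'} d_j$ matches the definition from the preliminaries. The two halves are: $L \geq C^*$, by extracting an anchor-restricted common subsequence from an optimal chain; and $L \leq C^*$, by extracting a chain from a longest anchor-restricted LCS.

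For $L \geq C^*$, given a chain $A'$ of coverage $C^*$, from each $A'[j]$ I emit the $d_j$ character matches $(x_j+t,\,i_j+t)$ for $t = 0, \dots, d_j-1$. Each emitted pair is a genuine character match because $A'[j]$ is an exact match and the offset is preserved, and it is supported by the anchor $A'[j]$ by construction. By the definition of $d_j$ we have $x_j + d_j \leq x_{j+1}$ and $i_j + d_j \leq i_{j+1}$, so the concatenation of emissions in chain order is strictly increasing in both $Q$- and $T$-coordinates, yielding an anchor-restricted common subsequence of total length $C^*$.

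For $L \leq C^*$, let $C = (y_1,j_1),\ldots,(y_L,j_L)$ be a longest anchor-restricted LCS. The plan is to process the characters left to right, maintaining an active anchor $\alpha$: if $\alpha$ supports the next character I keep it, and otherwise I switch to a new supporter that extends the current chain. This produces a chain $\alpha_1 < \alpha_2 < \cdots < \alpha_{N'}$ of supporters, and if the number of characters assigned to $\alpha_j=(x_{\alpha_j},i_{\alpha_j},\kappa_{\alpha_j})$ is $k_j$, then all such characters lie in $[x_{\alpha_j},\,\min(x_{\alpha_{j+1}},\,x_{\alpha_j}+\kappa_{\alpha_j})-1]$ on the $Q$-axis and, symmetrically, in the corresponding interval on the $T$-axis; hence $k_j \leq d_j$. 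Summing yields $L = \sum_j k_j \leq \sum_j d_j = \mathtt{coverage}(\alpha_1,\ldots,\alpha_{N'}) \leq C^*$.

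The main obstacle is the switch step in the second direction: for an arbitrary anchor-restricted common subsequence, the only supporter of the next character can have $x$- or $i$-coordinate strictly smaller than the active anchor's, breaking chain order. To resolve it I would rely on the hypothesis that $C$ is a \emph{longest} anchor-restricted LCS together with an exchange argument in the spirit of patience sorting: whenever two consecutive supporters $\alpha_l, \alpha_{l+1}$ appear out of chain order, if the two characters share $\alpha_{l+1}$'s diagonal then $\alpha_{l+1}$ also supports the earlier character (its $Q$-interval strictly contains $y_l$), so I re-assign and continue; otherwise the $\kappa_{\alpha_{l+1}}$ characters of $\alpha_{l+1}$ combined with a chain-compatible prefix of $C$ would form a strictly longer anchor-restricted common subsequence, contradicting maximality of $|C| = L$. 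Iterating this local exchange produces a globally chain-ordered assignment and completes the argument.
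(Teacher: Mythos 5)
Your overall architecture matches the paper's: the direction $L \ge C^*$ is exactly the "each anchor contributes $d_j$ offset-preserving character matches" argument that the paper imports from the cited work, and the direction $L \le C^*$ is the crux. You correctly isolate the obstacle---consecutive supporters of a longest anchor-restricted LCS need not be in chain order---but your proposed resolution does not go through. In the branch where $\alpha_{l+1}$ does not support the earlier character (different diagonal), you claim that the $\kappa_{\alpha_{l+1}}$ characters of $\alpha_{l+1}$ together with a chain-compatible prefix of $C$ would form a \emph{strictly longer} anchor-restricted common subsequence, contradicting maximality. There is no such contradiction available: out-of-order supporters genuinely occur for longest anchor-restricted LCSs, and nothing forces the spliced sequence you describe to exceed $|C|$ (the chain-compatible prefix may be much shorter than $l$ characters, and $C$ may already use most of $\alpha_{l+1}$). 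The correct move, which the paper makes, is a \emph{length-preserving exchange} rather than a contradiction: in the problematic sub-case $x_l \ge x_{l+1}$, $i_l < i_{l+1}$, $j_l < i_{l+1}$, replace the character match $Q[y_l]=T[j_l]$ by the match $Q[x_{l+1}]=T[i_{l+1}]$ at the start of the later anchor (verifying it still strictly precedes $(y_{l+1},j_{l+1})$), after which $\alpha_l$ can be discarded; the case $i_l \ge i_{l+1}$ is handled by re-assignment, much like your first branch. Note also that the paper performs this exchange always at the \emph{front} of the current chain and inducts on the suffix, precisely so that the replacement match (whose coordinates may decrease) cannot collide with an earlier match of $C$; your local exchange at an arbitrary position $l$ would additionally have to verify compatibility with match $l-1$, which you do not address.

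A secondary gap: you assert $k_j \le d_j$ on the grounds that all characters assigned to $\alpha_j$ lie in $[x_{\alpha_j},\,\min(x_{\alpha_{j+1}},\,x_{\alpha_j}+\kappa_{\alpha_j})-1]$, but your greedy assignment only guarantees they lie in $[x_{\alpha_j},\,x_{\alpha_j}+\kappa_{\alpha_j}-1]$: a character supported by $\alpha_j$ at a $Q$-position at least $x_{\alpha_{j+1}}$ (or $T$-position at least $i_{\alpha_{j+1}}$) is not counted by $d_j$. This step needs its own argument, either by showing the cleaned-up assignment has this property or by bounding $\sum_j k_j$ against the coverage globally rather than term by term.
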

\begin{proof}
    The authors of~\cite{MS20} proved that every chain $A'[1..N']$ of anchors induces a common subsequence between $Q$ and $T$ whose length equals the coverage of $A'$: each anchor $A'[l] = (x_l,i_l,\kappa'_l)$ contributes $c_l$ characters to this subsequence such that $c_l$ is the minimum between the characters of $[x_l...x_l+\kappa'_l-1]$ not covered by the rest of the chain $A[l+1..N']$ and the characters of $[i_l...i_l+\kappa'_l-1]$ not covered by the rest of the chain $A[l+1..N']$. We now prove that if $c$ is the length of an anchored-restricted LCS, then there is a \emph{weak chain}~\cite{MS20} of $A$ with coverage $c$, where weak chain is such that consecutive anchors $(x_l, i_l, \kappa'_l), (x_{l+1}, i_{l+1}, \kappa'_{l+1})$ of a weak chain satisfy $x_l < x_{l+1}$ and $i_{l} < i_{l+1}$. Our proof technique consists in filtering out anchors supporting the LCS (while preserving the coverage of the chain) so that the final set of anchors corresponds to a weak chain.
    
    Let $C[1..c]$ be an anchored-restricted LCS such that $C = Q[y_1]..Q[y_c] = T[j_1]..T[j_c]$ with $1 \le y_1 < .. < y_c \le |Q|$ and $1 \le j_1 < .. < j_c \le |T|$, and let $(x_l,i_l,\kappa'_l)$ the anchor supporting the match $Q[y_l] = T[j_l]$ for $1\le l\le c$, that is $x_l \le y_l \le x_l+\kappa'_l-1$ and $i_l \le j_l \le i_l+\kappa'_l-1$. We will show that we can remove anchors from the beginning of the chain so that (after the removal) $x_1 < x_2$ and $i_1 < i_2$ (the first anchor \emph{weakly precedes} the second anchor) while maintaining the coverage. The proof follows inductively by removing the first anchor and applying the same procedure in the rest of the chain and the rest of the anchored-restricted LCS until no anchors remain. We first show that we can obtain a non-strict inequality $x_1 \le x_2$ and then how to filter anchors when $x_1 = x_2$. The argument for $i_1$ and $i_2$ follows symmetrically.
    
    Consider the case where $x_1 \ge x_2$, then $x_2 \le x_1 \le y_1 < y_2 \le x_2+\kappa'_2-1$ that is, the second anchor is also covering $Q[x_1]$. If $i_1 \ge i_2$, then $i_2 \le i_1 \le j_1 < j_2 \le i_2+\kappa'_2-1$, thus the second anchor is also supporting the first match $Q[i_1]$ and thus we can safely remove the first anchor without changing the coverage of the chain. Otherwise $i_1 < i_2$, and suppose that the second anchor does not cover $T[i_1]$ (if it does we can remove the first anchor as before), that is $j_1 < i_2$ (the case $j_1 > i_2+\kappa'_2-1$ does not exist since $j_1 < j_2 \le i_2 + \kappa'_2-1$). In this case we can replace the match $Q[y_1] = T[j_1]$ by the match $Q[x_2] = T[j_2]$, which is covered by the second anchor and thus we can safely remove the first anchor (as we have discovered another anchored-restricted LCS). Indeed, the character match $Q[x_2] = T[i_2]$ exists since the second anchor is an exact match and it can replace the match $Q[y_1] = T[j_1]$ since it does not interfere with $Q[y_2] = T[j_2]$ ($x_2 \le x_1 \le y_1 < y_2$, and $j_2 > i_2$ since $j_2-i_2 = y_2- x_2 >0$).
\end{proof}
\begin{corollary}
The chaining algorithm by M\"akinen and Sahlin \cite{MS20} computes the length of an LCS between strings $Q$ and $T$ if it is given all (string) MEMs between $Q$ and $T$ as input anchors.
\label{obs:LCS}
\end{corollary}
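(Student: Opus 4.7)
The plan is to combine \Cref{thm:linear-lcs} with the observation that, when $A$ is the set of all (string) MEMs between $Q$ and $T$, any LCS is automatically an anchored-restricted LCS. Thus, it suffices to show that the length of an anchored-restricted LCS (with $A$ the set of all MEMs) equals the length of an LCS of $Q$ and $T$.

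The forward inequality is immediate from the definition: every anchored-restricted LCS is a common subsequence of $Q$ and $T$, so its length is at most the LCS length. For the reverse inequality, I would take any LCS $C[1..c]$ written as $C=Q[y_1]\cdots Q[y_c]=T[j_1]\cdots T[j_c]$ and, for each index $l\in[1..c]$, produce a MEM in $A$ that supports the match $Q[y_l]=T[j_l]$. The construction is the canonical one: starting from the single-character exact match $(y_l,j_l,1)$, extend it simultaneously in both directions as long as the characters continue to agree, obtaining a maximal exact match $(x_l,i_l,\kappa'_l)$ with $x_l\le y_l\le x_l+\kappa'_l-1$, $i_l\le j_l\le i_l+\kappa'_l-1$, and $y_l-x_l=j_l-i_l$. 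By construction this match is left-maximal and right-maximal, hence a MEM, and it supports the character match at the required offset. Therefore $C$ is witnessed as an anchored-restricted LCS with respect to $A$, proving the reverse inequality.

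Combining both inequalities, the length of an anchored-restricted LCS equals the length of an LCS whenever $A$ contains all MEMs, and by \Cref{thm:linear-lcs} this length equals the coverage returned by the chaining algorithm of M\"akinen and Sahlin. The only subtle point, and the single step that deserves care, is verifying that the two-sided maximal extension of a single matching pair is indeed a MEM as used throughout the paper; this is immediate from the string definition of MEM (left- and right-maximality are literally the stopping conditions of the extension), so no further work is required.
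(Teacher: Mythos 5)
Your proposal is correct and follows essentially the same route as the paper: extend each character match of the LCS in both directions to obtain a supporting MEM at the same offset, conclude that the LCS is anchored-restricted, and invoke \Cref{thm:linear-lcs}. The paper states this more tersely but the argument is identical, including the observation that the stopping conditions of the two-sided extension are exactly string left- and right-maximality.
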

\begin{proof}
It suffices to note that every character match of an LCS of length $c$ is supported by some MEM within the same offset (as one can start the match there and extend it to the left and right character by character), and thus, by \Cref{thm:linear-lcs}, the chaining algorithm by M\"akinen and Sahlin~\cite{MS20} finds a chain of coverage at least $c$.
\end{proof}
\end{document}